%%%%%%%%%%%%%%%%%%%%%%%%%%%%%%

%%% SAMPLE_IJAM file
%%% For LaTeX users:

\documentclass[12pt]{article}
\usepackage{amsmath, amsthm}
\usepackage{amssymb} %%
\usepackage{url} %%
\usepackage{float} %%
\usepackage{graphicx}%%
\textwidth 12cm \textheight 18cm

%%% Theorem Like Envirouments

\newtheoremstyle{theorem}%name
{10pt} % space above
{10pt} % space below
{\sl} % bofy font
{\parindent} % ident - empty=no indent, \parindent= paragraph indent
{\bf} % thm head font
{. } % punctuation after thm head
{ } % space after thm head: `` ``=normal \newline=linebreak
{} % thm head specification
\theoremstyle{theorem}
\newtheorem{theorem}{Theorem}

\newtheoremstyle{defi}%name
{10pt} % space above
{10pt} % space below
{\rm} % bofy font
{\parindent} % ident - empty=no indent, \parindent= paragraph indent
{\bf} % thm head font
{. } % punctuation after thm head
{ } % space after thm head: `` ``=normal \newline=linebreak
{} % thm head specification
\theoremstyle{defi}

%%%% Author's Definitions start here
% ..............
%%%% End of Author's Definitions

\begin{document} %%%%%%%%%%%%%%%%%%%%%%%%%%

\title{A new stochastic diffusion process to model and predict electricity production from natural gas sources in the United States}

\author{Safa' Alsheyab\\[6pt]
Department of Mathematics and Statistics\\ Jordan University of Science and Technology\\ P.O.Box 3030, Irbid 22110, Jordan\\
e-mail:smalsheyab6@just.edu.jo\\[6pt]
}

\maketitle

\begin{abstract}
This paper introduces a new stochastic diffusion process to model the electricity production from natural gas sources (as a percentage of total electricity production) in the United States. The method employs trend function
analysis to generate fits and forecasts with both conditional and unconditional estimated trend functions. Parameters are estimated using the maximum likelihood (ML) method, based on discrete sampling paths of the variable "electricity production from natural gas sources in the United States" with annual data from 1990 to 2021. The results show that the
proposed model effectively fits the data and provides dependable medium-term forecasts for 2022-2023.

\medskip

{\bf Math. Subject Classification:} 62M86, 60H30, 65C30

{\bf Key Words and Phrases:} Stochastic diffusion process; maximum likelihood estimate; fit and forecast; trend function; prediction accuracy; electricity production from natural gas

\end{abstract}

%%%%%%%% Section 1 %%%%%%%%%%%%%%%%%%%%
\section{Introduction}

Electricity is a fundamental part of modern life, greatly improving our standard of living by powering homes, businesses, and technologies. However, generating electricity can also harm the environment. The extent of this damage largely depends on how electricity is produced. For instance, burning coal releases twice as much carbon dioxide as burning an equal amount of natural gas. Considering the many benefits of natural gas, such as its relatively low cost and low carbon footprint, it is considered a good source of electricity. Plus, gas plants can be built quickly, unlike nuclear facilities.
The percentage of electricity produced from natural gas, $\%$ of total, represents the share of natural gas in overall
electricity generation, measured in total gigawatt hours (GWh) produced by power plants.%========================================

Electricity production in the United States has undergone a structural transformation, with natural gas becoming one of the main sources of power generation. Understanding the dynamics of its share in total electricity production is essential for policy design, risk management, and long-term planning. Electricity generation from natural gas in the United States has grown steadily over the past thirty years. As reported in Table 1, production increased from 11.85815 units in 1990 to over 41.90 units in 2023, more than tripling. However, this growth has not been steady; the data show fluctuations and irregular deviations. For example, while production rose steadily during the early 2000s, it experienced a temporary dip in 2003, a sharp acceleration after 2009, and more recently, a slight decline in 2021 before rebounding in 2022–2023. Such variability indicates the influence of both long-term growth trends and short-term stochastic disturbances. Modelling such behaviour requires a framework that captures both deterministic dynamics and random shocks. Stochastic differential equations (SDEs) provide a flexible approach that combines deterministic trends with random variability to meet these challenges.\

Previous studies have extensively used stochastic models in finance and natural sciences, with growing applications in energy markets. For example, Schwartz \cite{Schwartz1997} developed stochastic models for commodity prices, which were later adapted for electricity markets. Geman and
Roncoroni \cite{GemanRoncoroni2006} highlighted the importance of capturing seasonality and jumps in power prices, Benth et al. \cite{Benth} used stochastic processes to describe electricity spot prices with periodic components, while  Weron \cite{Weron2014} provided a thorough review of stochastic models designed specifically for power markets, emphasizing the need to include both cyclical and random dynamics to accurately represent electricity price behavior.
%%%%%% 
 In the context of electricity, up to our knowledge, most of the literature focuses on prices, or consumption, or optimization models, or uses different methodological approaches such as stochastic time change, hybrid diffusions, or optimization frameworks. For example, Borovkova and Schmeck \cite{Borovkova} have primarily focused on modelling electricity prices using a time-changed jump diffusion that adapts volatility and jump intensity over time. Mert et al. \cite{Mert} develop a hybrid stochastic diffusion ensemble specifically for forecasting natural-gas prices. Leonel et al. \cite{Leonel} focus on optimization-based decision-making for large industrial self-producers using natural gas, aiming to minimize energy costs under uncertainty through a two-stage stochastic optimization framework. While for the natural gas production in the United States, Cai and Deng \cite{Cai} propose a grid-optimized fractional grey model to forecast U.S. natural gas production; their focus is on total production forecasting using deterministic/grey system techniques.\\
 %%%%
  Despite these advancements, limited attention has been given to electricity production shares by fuel source within a stochastic diffusion framework. This creates a gap in the literature. %=============================
Our goal is to use a new stochastic differential equation model to analyze U.S. electricity production from natural gas between 1990 and 2021 and to forecast for 2022-2023. The stochastic component captures external shocks and random variability.
%==============
 %==============================
Building on this foundation, we consider the following SDE of the form
\begin{equation}
dX(t)=\left(\frac{2}{t}-\lambda-\frac{\lambda\pi}{2 t^2} e^{-\lambda/ t}\cot(\frac{\pi}{2 } e^{-\lambda/ t})\right)X(t)dt+\sigma X(t)dW(t),
\end{equation}  
%=================
 which is particularly relevant for modelling U.S. electricity production from natural gas ($\%$ of total), here $W$ is the standard Brownian motion, and $\lambda$ and $\sigma$ are unknown indexed parameters. For more information about Brownian motion, SDEs, and their applications, we refer the reader to \cite{Karatzas}, \cite{Evans}, \cite{Oksendal}, \cite{Mao}, and \cite{Sobczyk}.

 The remainder of this paper is organized as follows: In the subsequent section, the proposal model is defined as the solution to a stochastic differential equation (SDE), from which the explicit expression of the process, the transition probability distribution function (TPDF), and the moments( particularly, the conditional and unconditional trends of the process) are derived. Section \ref{MLsection} is dedicated to estimating the model's parameters using the Maximum Likelihood Estimation (MLE) method, based on discrete sampling of the process. In Section \ref{Application}, the application of the proposed model to specific data is presented: electricity production from natural gas sources ($\%$ of total)  in the United States for the period 1990-2023. The accuracy of the forecast is assessed using standard performance metrics, including mean absolute error (MAE), root mean square error (RMSE), and mean absolute percentage error (MAPE). The performance of the proposed model is compared with that of the Gompertz model using the same dataset. Furthermore, a small Monte Carlo experiment is conducted to assess the robustness of the estimation procedure.  Finally, Section \ref{Conclusion} offers concluding remarks.
%%%%%%%

\section{The model and its  probabilistic characteristics}
In this section, we present a novel Sine-Like stochastic diffusion process. We derive and analyze the critical properties of this process, encompassing the existence of solutions, transition probability distributions, and moments.
\subsection{Sine-Like (SL) stochastic diffusion model}
The SL process is the non-homogeneous diffusion process $\{X(t),t \in [t_1,T], t_1>0\}$ with values in $(0,\infty)$, and satisfies the following It$\hat{\text{o}}$'s SDE
%The above-described process can be formally viewed as a solution to the following SDP
\begin{equation}\label{SDE2}
dX(t)=a(t,X(t),\theta)dt+b^{1/2}(t,X(t),\theta)dW(t);\hspace{.5cm} X(t_1)=X_1
\end{equation}  
with infinitesimal moments given by 
\begin{align}\label{infinitesimal}
&a(t,X(t),\theta)=\left(\frac{2}{t}-\lambda-\frac{\lambda\pi}{2 t^2} e^{-\lambda/ t}\cot(\frac{\pi}{2 } e^{-\lambda/ t})\right)X(t),\nonumber\\
&b^{1/2}(t,X(t),\theta)=\sigma X(t)
\end{align}
where $\sigma>0$, and $\lambda$  is a non-zero real constant. The solution of \eqref{SDE2}-\eqref{infinitesimal} is found, by applying It$\hat{\text{o}}$'s integral, as follows
\begin{align*}
X(t)=X_{1}+&\int_{t_{1}}^{t}\left(\frac{2}{s}-\lambda-\frac{\lambda\pi}{2 s^2} e^{-\lambda/ s}\cot(\frac{\pi}{2 } e^{-\lambda/ s})\right)X(s)ds\\
&+\sigma\int_{t_{1}}^{t}X(s)\text{d}W(s)
\end{align*}
%%%%%%%%%%%%%%%%%%%%%%%%
\subsection{Existence and uniqueness}
In this part, we show the existence and uniqueness of the solution for the SL process given in \eqref{SDE2}-\eqref{infinitesimal}. To achieve this goal, it is sufficient to verify uniform Lipschitz and linear growth conditions for the infinitesimal moments; see \cite{Kloeden}, which are realized in the following theorem.
\begin{theorem}\label{th1}
The SDE in \eqref{SDE2}-\eqref{infinitesimal} has a unique solution. 
\end{theorem}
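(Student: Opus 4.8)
The plan is to apply the classical existence-and-uniqueness theorem for It\^o SDEs (see \cite{Kloeden}), which guarantees a unique strong solution once the drift and diffusion coefficients satisfy a uniform Lipschitz condition and a linear growth condition in the state variable, uniformly for $t$ in the compact interval $[t_1,T]$. The decisive structural feature to exploit is that both coefficients in \eqref{infinitesimal} are \emph{linear} in the state: writing
\begin{equation*}
\alpha(t):=\frac{2}{t}-\lambda-\frac{\lambda\pi}{2 t^2} e^{-\lambda/ t}\cot\!\left(\frac{\pi}{2 } e^{-\lambda/ t}\right),
\end{equation*}
one has $a(t,x,\theta)=\alpha(t)\,x$ and $b^{1/2}(t,x,\theta)=\sigma x$, so every state-dependence enters through multiplication by a factor that depends only on $t$. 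Consequently both required conditions collapse to a single question: is $|\alpha(t)|$ bounded on $[t_1,T]$?

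First I would verify that $\alpha$ is continuous on $[t_1,T]$. The terms $2/t$ and the prefactor $\lambda\pi/(2t^2)$ are continuous for $t\ge t_1>0$, and $e^{-\lambda/t}$ is continuous; the only delicate factor is $\cot\!\left(\tfrac{\pi}{2}e^{-\lambda/t}\right)$, since cotangent has poles at integer multiples of $\pi$. Here I would note that the argument $g(t):=\tfrac{\pi}{2}e^{-\lambda/t}$ maps the compact interval $[t_1,T]$ onto a compact subinterval of $(0,\infty)$; as long as this image avoids the poles $\{n\pi:n\ge 1\}$ --- which in the relevant regime $\lambda>0$ is automatic, because then $g(t)\in(0,\pi/2)$ --- the composition $\cot\circ\, g$ is continuous, and hence so is $\alpha$. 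A continuous function on a compact interval attains its bounds, so $M:=\max_{t\in[t_1,T]}|\alpha(t)|<\infty$.

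With $M$ in hand the two conditions are immediate and uniform in $t$. For the Lipschitz condition,
\begin{equation*}
\bigl|a(t,x)-a(t,y)\bigr|+\bigl|b^{1/2}(t,x)-b^{1/2}(t,y)\bigr|=\bigl(|\alpha(t)|+\sigma\bigr)|x-y|\le (M+\sigma)\,|x-y|,
\end{equation*}
and for the linear growth condition,
\begin{equation*}
|a(t,x)|^{2}+|b^{1/2}(t,x)|^{2}=\bigl(\alpha(t)^{2}+\sigma^{2}\bigr)x^{2}\le (M^{2}+\sigma^{2})\bigl(1+x^{2}\bigr).
\end{equation*}
Both estimates hold with constants independent of $t$, so the cited theorem yields a unique strong solution.

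I expect the only genuine obstacle to be the continuity and boundedness of $\alpha$ near the poles of cotangent: one must confirm that $\tfrac{\pi}{2}e^{-\lambda/t}$ stays bounded away from $\pi\mathbb{Z}$ throughout $[t_1,T]$, which is where the hypothesis $t_1>0$ and the sign of $\lambda$ actually matter. Once that is secured, the linearity of the coefficients makes the Lipschitz and growth estimates entirely routine, and no further work is needed.
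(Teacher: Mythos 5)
Your proof is correct and takes essentially the same route as the paper's: exploit the linearity of both coefficients in $x$, bound the time-dependent drift factor by its supremum over the compact interval $[t_1,T]$, and invoke the classical existence--uniqueness theorem cited from \cite{Kloeden}. The only difference is that you explicitly justify the finiteness of that supremum (continuity of the cotangent composition and avoidance of its poles), a step the paper's proof silently assumes --- and your caveat that $\frac{\pi}{2}e^{-\lambda/t}$ must stay away from $\pi\mathbb{Z}$ is genuinely pertinent, since the paper's own fitted value $\hat{\lambda}<0$ lies outside the regime $\lambda>0$ where this is automatic.
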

\begin{proof}
On the one hand, consider $x,y\in\mathbb{R}^+$ and $t\in[t_1,T]$. It then follows that
\begin{align*}
&\vert a(t,x)-a(t,y)\vert+\vert \sqrt{b(t,x)}-\sqrt{b(t,y)}\vert\\
&=\vert a(t,x-y)\vert+\vert \sqrt{b(t,x-y)}\vert,\\
&=\left\vert\left(\frac{2}{t}-\lambda-\frac{\lambda\pi}{2 t^2} e^{-\lambda/ t}\cot(\frac{\pi}{2 } e^{-\lambda/ t})\right)(x-y)\right\vert+\vert {\sigma(x-y)}\vert,\\
&=\left(\left\vert\left(\frac{2}{t}-\lambda-\frac{\lambda\pi}{2 t^2} e^{-\lambda/ t}\cot(\frac{\pi}{2 } e^{-\lambda/ t})\right)\right\vert+\vert\sigma\vert\right)\vert x-y\vert,\\
&\leq\left(\underset{t_1\leq t\leq T}{\sup}\left\vert\left(\frac{2}{t}-\lambda-\frac{\lambda\pi}{2 t^2} e^{-\lambda/ t}\cot(\frac{\pi}{2 } e^{-\lambda/ t})\right)\right\vert+\vert\sigma\vert\right)\left\vert x-y\right\vert
\end{align*}
therefore, the SL process satisfies a uniform Lipschitz. On the other hand, this process satisfies linear growth condition as for $y=0$, we have
\begin{align*}
&\vert a(t,x)\vert^2+\vert \sqrt{b(t,x)}\vert^2\leq\left(\vert a(t,x)\vert+\vert \sqrt{b(t,x)}\vert\right)^2,\\
&\leq\left[\left(\underset{t_1\leq t\leq T}{\sup}\left\vert\left(\frac{2}{t}-\lambda-\frac{\lambda\pi}{2 t^2} e^{-\lambda/ t}\cot(\frac{\pi}{2 } e^{-\lambda/ t})\right)\right\vert+\vert\sigma\vert\right)\vert x\vert\right]^2,\\
&\leq\left(\underset{t_1\leq t\leq T}{\sup}\left\vert\left(\frac{2}{t}-\lambda-\frac{\lambda\pi}{2 t^2} e^{-\lambda/ t}\cot(\frac{\pi}{2 } e^{-\lambda/ t})\right)\right\vert+\vert\sigma\vert\right)^2(1+\vert x\vert)^2.
\end{align*}
Thus, there is an almost surely (a.s.)  continuous process $\{x(t),t\in[t_1,T];t_1>0\}$ that is a.s. the unique solution of the  SDE (\ref{SDE2})-\eqref{infinitesimal}. 
\end{proof}
\subsection{The probability distribution  and moments of the process}
The explicit solution of the SDE (\ref{SDE2})-\eqref{infinitesimal} can be obtained by means of the appropriate transformation; $Y(t)=\log(X(t))$, and by applying It\^{o}'s formula \cite[Theorem 4.57]{JS03} to $Y$. Then, we have the following
\begin{align*}
dY(t)&= \frac{1}{X(t)}dX(t)- \frac{1}{2X^2(t)}\sigma ^2 X^2(t)dt\\
&=\left(\frac{2}{t}-\lambda-\frac{\lambda\pi}{2 t^2} e^{-\lambda/ t}\cot(\frac{\pi}{2 } e^{-\lambda/ t})\right)dt+\sigma dW(t)- \frac{\sigma ^2 }{2}dt\\
&=\left(\frac{2}{t}-\lambda-\frac{\lambda\pi}{2 t^2} e^{-\lambda/ t}\cot(\frac{\pi}{2 } e^{-\lambda/ t})- \frac{\sigma ^2 }{2}\right)+\sigma dW(t)\\
\end{align*}
Where $Y(t_1)=\log(X_1)$. By integrating the above equation, we obtain,
\begin{align*}
Y(t)-Y(t_1)= \int_{t_1}^{t}&\left(\frac{2}{s}-\lambda-\frac{\lambda\pi}{2 s^2} e^{-\lambda/ s}\cot(\frac{\pi}{2 } e^{-\lambda/ s})- \frac{\sigma ^2 }{2}\right)ds\\
&+\sigma (W(t)-W(t_1)),
\end{align*}
and hence
\begin{align*}
Y(t)&=Y(t_1)+2 \log(t/t_1)-\lambda(t-t_1)+\log\left(\sin\left(\frac{\pi}{2} e^{-\lambda/ t}\right)\right)\\
&\hspace{1cm}-\log\left(\sin\left(\frac{\pi}{2}e^{-\lambda/ t_1}\right)\right)- \frac{\sigma ^2 }{2}(t-t_1)+\sigma (W(t)-W(t_1)),
\end{align*}
Therefore,  the solution in terms of the original SL process
%\small

%%
\begin{align}
%X(t)&=X_1\left(\frac{t}{t_1}\right)^{ 2 }\exp\left(-\lambda(t-t_1)+\log\left(\sin\left(\frac{\pi}{2} e^{-\lambda/ t}\right)\right)-\log\left(\sin\left(\frac{\pi}{2} e^{-\lambda/ t_1}\right)\right)- \frac{\sigma ^2 }{2}(t-t_1)\right)e^{\sigma (W(t)-W(t_1))}\nonumber\\
X(t)&=X_1\left(\frac{t}{t_1}\right)^{ 2 }\frac{\sin\left(\frac{\pi}{2} e^{-\lambda/ t}\right)}{\sin\left(\frac{\pi}{2} e^{-\lambda/ t_1}\right)}e^{-\lambda(t-t_1)- \frac{\sigma ^2 }{2}(t-t_1)}e^{\sigma (W(t)-W(t_1))}\label{exa}
\end{align}
%\normalsize
Observe that since the initial condition $Y(t_1)$ is constant a.s., and $Y(t)$ is a Markovian process, it then follows that the finite dimensional distribution of $Y(t)$ is normal. Then, the finite dimensional distribution of $X(t)$ is log-normal distribution, and  the transition probability distribution (TPDF) of $X(t)$ given $X(s)$ where $s<t$ follows a log-normal distribution
 denoted by $\Lambda_1(\mu(s,t,x_{s}), \sigma^2 (t-s))$, where $\mu(s,t,x_s)$ is given by
\begin{align*}
\mu(s,t,x)=\log(x)&+2\log(t/s)-\lambda(t-s)+\log\left(\sin\left(\frac{\pi}{2} e^{-\lambda/ t}\right)\right)\\
&-\log\left(\sin\left(\frac{\pi}{2}e^{-\lambda/ s}\right)\right)- \frac{\sigma ^2 }{2}(t-s).
\end{align*}
Therefore, the TPDF of the process has the following form
\begin{align}\label{pdf}
f(y,t|x_s,s)=\frac{1}{y}\left[2\pi\sigma^2(t-s) \right]^{-1/2}\exp\left(-\frac{[\log(y)-\mu(s,t,x_s)]^2}{2\sigma^2(t-s)}\right).
\end{align}
%\subsection{Computation of the mean functions}
Taking into account that $X(t)|X(s)=x_s$ is distributed according to $\Lambda_1(\mu(s,t,x_{ts}), \sigma^2 (t-s)), $ and taking into account the properties of this distribution, the $n^{th}$ conditional moment of $X(t)$ given $X(s)$ is
\begin{align*}
E[X^n(t)|X(s)=x_s]=\exp\left(n\mu(s,t,x_s)+\frac{n^2\sigma^2}{2}(t-s)\right)
\end{align*}
Then, the conditional mean, which is considered as the trend function $(n=1)$ of the process, is
\begin{align}\label{ECTF1}
E[X(t)|X(s)=x_s]=x_s\left(\frac{t}{s}\right)^{2}e^{-\lambda(t-s)+\log\left(\sin\left(\frac{\pi}{2} e^{-\lambda/ t}\right)\right)-\log\left(\sin\left(\frac{\pi}{2}e^{-\lambda/ s}\right)\right)}
\end{align}
On the other hand, the mean function or the unconditional trend of the process under the assumption that  $P(x(t_1)=x_1)=1$ is given by 
\begin{align}\label{ETF1}
E[X(t)]&=x_{1}\left(\frac{t}{t_1}\right)^{2}e^{-\lambda(t-t_1)+\log\left(\sin\left(\frac{\pi}{2} e^{-\lambda/ t}\right)\right)-\log\left(\sin\left(\frac{\pi}{2} e^{-\lambda/ t_1}\right)\right)}\nonumber\\
&=x_{1}\left(\frac{t}{t_1}\right)^{2}\frac{\sin\left(\frac{\pi}{2} e^{-\lambda/ t}\right)}{\sin\left(\frac{\pi}{2} e^{-\lambda/ t_1}\right)}e^{-\lambda(t-t_1)}
%
%&=x_{t_1}\left(\frac{t}{t_1}\right)^{-2}\frac{\sin\left(\frac{\pi}{2}e^{-\lambda/ t}\right)}{\sin\left(\frac{\pi}{2} e^{-\lambda/ t_1}\right)}e^{-\lambda(t-t_1)}
\end{align} 
Since the trend function of $X(t)$ incorporates a sinusoidal component, we refer to the resulting model as a Sine-Like stochastic diffusion process. The variance of the process is 
\begin{align*}
Var[X(t)]&=E[X^2(t)]-(E[X(t)])^2\\
&=x^2_{1}\left(\frac{t}{t_1}\right)^{4}\frac{\sin^2\left(\frac{\pi}{2} e^{-\lambda/ t}\right)}{\sin^2\left(\frac{\pi}{2} e^{-\lambda/ t_1}\right)}e^{-2\lambda(t-t_1)}\left(e^{\sigma^2(t-t_1)}-1\right).
\end{align*}

%%%%%%%%%%%%%%%%%%%%%%%%%%%%
\section{Inference on the process}\label{MLsection}
In this section, we examine the ML estimation of the parameters of the model from which we can obtain, from Zenha’s theorem \cite{Zehna}, the corresponding aforementioned estimated trend function and the conditional estimated trend function.
%Once the process, along with its basic properties, is introduced and discussed, it becomes important to examine its significance in simulation and application. However, the presence of unknown parameters makes trend analysis in practice difficult. Consequently, these parameters need to be estimated, and the maximum likelihood estimate emerges as a suitable choice, given the known functional form of the TPDF. Furthermore, ML estimates possess advantageous properties such as invariance, efficiency, and asymptotic normality.
\subsection{ML estimates}
As long as we obtain the explicit expression of the TPDF of the process, we can estimate the two parameters involved in the drift and diffusion functions using the ML method. Let us consider a discrete sampling of the process $x(t_1),x(t_2),\dots,x(t_n)$ at times $t_1,t_2,\dots,t_n=T$. For simplicity, put $t_{j+1}-t_{j}=h$ and  use $x_{i}$ to refer to $x(t_i)=x_i$. Assuming the initial condition $P(X(t_1)=x_1)=1$. The likelihood function of the parameter $\boldsymbol\theta=(\lambda,\sigma^{2})^{T}$ can be obtained from equation (\ref{pdf}) as
%\scriptsize

%\normalsize
%%%%%%%
%\scriptsize
\begin{align*}
&\mathcal{L}(\boldsymbol\theta)=\underset{j=1}{\overset{n-1}{\prod}} f(x_{j+1},t_{j+1}|x_{j},t_{j}),\\
&=\underset{j=1}{\overset{n-1}{\prod}}\frac{1}{x_{j+1}}\left[2\pi\sigma^2(t_{j+1}-t_{j}) \right]^{\frac{-1}{2}}\exp\big(\frac{-[\log(x_{j+1})-\mu(j,j+1,x_{j})]^2}{2\sigma^2(t_{j+1}-t_{j})}\big)\\
%
%&=\underset{j=1}{\overset{n-1}{\prod}}\frac{1}{x_{j+1}}\left[2\pi\sigma^2(t_{j+1}-t_{j}) \right]^{-1/2}\\
%
%&\exp\left(-\frac{[\log(x_{j+1})-\left(\log(x_j)+2\log(\frac{t_{j+1}}{t_j})-\lambda(t_{j+1}-t_j)+\log\left(\sin\left(\frac{\pi}{2} e^{-\lambda/ t_{j+1}}\right)\right)-\log\left(\sin\left(\frac{\pi}{2}e^{-\lambda/ t_j}\right)\right)- \frac{\sigma ^2 }{2}(t_{j+1}-t_j)\right)]^2}{2\sigma^2(t_{j+1}-t_{j})}\right),\\
%
&=\underset{j=1}{\overset{n-1}{\prod}}\frac{1}{x_{j+1}}\left[2\pi\sigma^2h \right]^{-1/2}
\exp\left(-\frac{\left[\mathcal{H}_{\lambda,j}+ \frac{\sigma ^2 }{2}h\right]^2}{2\sigma^2h}\right),
\end{align*}

where \[\mathcal{H}_{\lambda,j}=\log(\frac{x_{j+1}}{x_j})-2\log(\frac{t_{j+1}}{t_j})+\lambda h-\log\left(\frac{\sin\left(\frac{\pi}{2} e^{-\lambda/ t_{j+1}}\right)}{\sin\left(\frac{\pi}{2}e^{-\lambda/ t_j}\right)}\right).\]
%\normalsize
%%%%%%
%%
The log-likelihood equation is
\begin{align}\label{LogLikelihood1}
&\boldsymbol\ell(\lambda,\sigma^{2})
%&=\underset{j=1}{\overset{n-1}{\sum}}\log\left[\frac{1}{x_{j+1}} \left[2\pi\sigma^2h \right]^{-1/2}\exp\left(-\frac{[\log(x_{j+1})-\left(\log(x_j)+\alpha \log(t_{j+1}/t_j)-\frac{10^3}{-10^2+\alpha} (t_{j+1}^{\frac{-10^2}{\alpha}+1}-t_j^{\frac{-10^2}{\alpha}+1})- \frac{\sigma ^2 h}{2}\right)]^2}{2\sigma^2h}\right)\right],\\
%&=-\frac{n-1}{2}\log(2\pi h)-\frac{n-1}{2}\log(\sigma ^2)-\underset{j=1}{\overset{n-1}{\sum}}\log(x_{j+1})\nonumber\\
%&\hspace{0.5cm}-\frac{1}{2\sigma ^2 h}\underset{j=1}{\overset{n-1}{\sum}}[\log(x_{j+1})-\left(\log(x_j)+\alpha \log(\frac{t_{j+1}}{t_j})-\frac{10^3}{-10^2+\alpha} (t_{j+1}^{\frac{-10^2}{\alpha}+1}-t_j^{\frac{-10^2}{\alpha}+1})- \frac{\sigma ^2 h}{2}\right)]^2\nonumber\\
=-\frac{n-1}{2}\log(2\pi h)-\frac{n-1}{2}\log(\sigma ^2)-\underset{j=1}{\overset{n-1}{\sum}}\log(x_{j+1})\nonumber\\
&\hspace{1.8cm}-\frac{1}{2\sigma ^2 h}\underset{j=1}{\overset{n-1}{\sum}}\left[\mathcal{H}_{\lambda,j}+ \frac{\sigma ^2 }{2}h\right]^2.
\end{align}
The log-likelihood function $\boldsymbol\ell(\lambda,\sigma^{2})$ can be maximized by solving the nonlinear likelihood equation obtained by
differentiating with respect to the parameter vector $\boldsymbol\theta=(\lambda,\sigma^{2})^{T}$. The first-order partial derivatives of $\boldsymbol\ell(\lambda,\sigma^{2})$ are given by
 \begin{align}
\frac{\partial\boldsymbol\ell(\lambda,\sigma^2)}{\partial \lambda}&=\frac{-1}{\sigma^2 h}\underset{j=1}{\overset{n-1}{\sum}}(\mathcal{H}_{\lambda,j}+ \frac{\sigma^2 h}{2})\Big[h+\frac{\pi e^{-\lambda/t_{j+1}}\cot\left(\frac{\pi}{2}e^{-\lambda/t_{j+1}}\right)}{2 t_{j+1}}\Big]\nonumber\\
%5
&-\frac{1}{\sigma^2 h}\underset{j=1}{\overset{n-1}{\sum}}(\mathcal{H}_{\lambda,j}+ \frac{\sigma^2 h}{2})\Big[-\frac{\pi e^{-\lambda/t_{j}}\cot\left(\frac{\pi}{2}e^{-\lambda/t_{j}}\right)}{2 t_{j}}\Big]=0.\label{withrespectToAlpha1}\\
\frac{\partial\boldsymbol\ell(\lambda,\sigma^2))}{\partial \sigma^{2}}&=-\frac{n-1}{2\sigma ^2 }+\frac{1}{2\sigma ^4 h}\underset{j=1}{\overset{n-1}{\sum}}\mathcal{H}_{\lambda,j}^2-\underset{j=1}{\overset{n-1}{\sum}}\frac{h}{8}=0,\label{withrespectToSigma2}
\end{align}
%\normalsize

Let $S(\boldsymbol\theta) = (\partial\boldsymbol\ell(\boldsymbol\theta)/\partial\lambda, \partial\boldsymbol\ell(\boldsymbol\theta)/\partial\sigma^{2})^{T}$ be the score function. The MLE of $\widehat{\boldsymbol\theta}=(\widehat{\lambda},\widehat{\sigma}^{2})$ can be obtained by solving the system of equations $S(\boldsymbol\theta) = 0$. Since closed-form solutions are not available, numerical methods are required to compute these estimates. From equation (\ref{withrespectToSigma2}) we obtain 
\begin{align}\label{OptimalSigma2}
\frac{{\hat{\sigma}}^2}{2}=\frac{1}{h}\left[\left(1+\frac{1}{n-1}\underset{j=1}{\overset{n-1}{\sum}}\mathcal{H}_{\hat{\lambda},j}^2\right)^{1/2}-1\right.]
\end{align}
Then, by substituting the expression of ${{\hat{\sigma}}^2}/{2}$ in \eqref{OptimalSigma2} into Equation \eqref{withrespectToAlpha1}, we obtain the estimator $\hat{\lambda}$ from the following nonlinear equation:
\begin{align}
&\underset{j=1}{\overset{n-1}{\sum}}\Big(\mathcal{H}_{\hat{\lambda},j}+\frac{{\hat{\sigma}}^2 h}{2}\Big)\Big[h+\frac{\pi e^{-\hat{\lambda}/t_{j+1}}\cot\left(\frac{\pi}{2}e^{-\hat{\lambda}/t_{j+1}}\right)}{2 t_{j+1}}\Big]\nonumber\\
&\hspace{2cm}-\underset{j=1}{\overset{n-1}{\sum}}\Big(\mathcal{H}_{\hat{\lambda},j}+\frac{{\hat{\sigma}}^2 h}{2}\Big)\Big[\frac{\pi e^{-\hat{\lambda}/t_{j}}\cot\left(\frac{\pi}{2}e^{-\hat{\lambda}/t_{j}}\right)}{2 t_{j}}\Big]=0.\label{m1}
\end{align}
%Let $g(\hat{\lambda})$ be the left-hand-side equation \eqref{m1}. Therefore, the ML estimate of ${\lambda}$ can be achieved by solving the non-linear equation; $g(\hat{\lambda})=0.$
%%%%%%%%%%%%
\subsection{Estimated trend functions and confidence bounds}\label{Confidence Bound}
We provide estimates for the conditional mean and the mean of the process by replacing the parameters in equations (\ref{ECTF1}) and  (\ref{ETF1}) by their ML estimators, due to the invariance property of the ML estimates, see for example, Theorem [5.28, 308] in \cite{Schervish}. Let $\widehat{\lambda}$ and $\widehat{\sigma}^{2}$ be the ML estimates of $\lambda$ and $\sigma^{2}$ respectively, then the estimated conditional mean of the process (ECMF) is
\begin{align}
\widehat{E}(X(t)|X(s))=x_s\left(\frac{t}{s}\right)^{2}\frac{\sin\left(\frac{\pi}{2} e^{-\hat{\lambda}/ t}\right)}{\sin\left(\frac{\pi}{2} e^{-\hat{\lambda}/ s}\right)}e^{-\hat{\lambda}(t-s)}
\end{align} 
Similarly, the estimated mean function (EMF) of the process is 
\begin{align}\label{EMF}
\widehat{E}(X(t)|X(t_1))=x_{1}\left(\frac{t}{t_1}\right)^{2}\frac{\sin\left(\frac{\pi}{2} e^{-\hat{\lambda}/ t}\right)}{\sin\left(\frac{\pi}{2} e^{-\hat{\lambda}/ t_1}\right)}e^{-\hat{\lambda}(t-t_1)},
\end{align} 
taking into account the assumption $P(X(t_{1})=x(t_{1}))=1.$ In addition, we can obtain a confidence band for the CMF and MF of the process, using the procedure described in \cite{Katsamaki}. From Equation \eqref{pdf}, we have that for $t>s$, $X(t)|X(s)$ follows $\Lambda_1(\mu(s,t,x_{s}), \sigma^2 (t-s))$. Therefore, we have that
\begin{align*}
Z&=\frac{\ln(X(t))-\mu(s,t,x_{s})}{\sigma\sqrt{t-s}}\sim \mathcal{N}(0,1).
\end{align*}
Consequently, a $(1-\alpha)100\%$ confidence band for $z$ is determined by $P(-Z_{\alpha/2}\leq Z\leq Z_{\alpha/2})=1-\alpha,$ for $\alpha\in(0,1).$ and hence,
\begin{align*}
\mathbb{P}\left[-Z_{\frac{\alpha}{2}}\leq\frac{\mathcal{H}_{\lambda,t}+\frac{\sigma ^2 }{2}(t-t_1)}{\sigma\sqrt{t-t_1}}\leq Z_{\frac{\alpha}{2}}\right]\approx 1-\alpha.
\end{align*}
here \[\mathcal{H}_{\lambda,t}=\log\left(\frac{x(t)}{x_1}\right)-2\log(t/t_1)+\lambda(t-t_1)-\log\left(\frac{\sin\left(\frac{\pi}{2} e^{-\lambda/ t}\right)}{\sin\left(\frac{\pi}{2}e^{-\lambda/ t_1}\right)}\right).\]

From this, we can obtain a $(1-\alpha)100\%$ confidence bound (CB) for $x(t)$ as
\begin{equation}\label{xConfidenceBound}
x_{lower}(t)\leq x(t)\leq x_{upper}(t)
\end{equation}
%%%%%%
where,
\begin{align*}\label{xLowerUpper}
&x_{lower}(t)=x_1e^{-Z_{\alpha/2}\sigma\sqrt{t-t_1}+2\log(\frac{t}{t_1})-\lambda(t-t_1)+\log\left(\frac{\sin\left(\frac{\pi}{2} e^{-\lambda/ t}\right)}{\sin\left(\frac{\pi}{2}e^{-\lambda/ t_1}\right)}\right)- \frac{\sigma ^2 }{2}(t-t_1)}\nonumber\\
 &x_{upper}(t)=x_1e^{Z_{\alpha/2}\sigma\sqrt{t-t_1}+2\log(\frac{t}{t_1})-\lambda(t-t_1)+\log\left(\frac{\sin\left(\frac{\pi}{2} e^{-\lambda/ t}\right)}{\sin\left(\frac{\pi}{2}e^{-\lambda/ t_1}\right)}\right)- \frac{\sigma ^2 }{2}(t-t_1)}.
\end{align*}

%%%%%%%%%%%%%%%%%%%%%%%%%%%%%%%%%%%%%%%%%%%%%%

%{\bf{.............................................I AM HERE................................................}}\\

\section{Applications}\label{Application}
This section focuses on giving an application of our model to real data, where we give an application in real data for the electricity production from natural gas sources ($\%$ of total) in the United States during the period 1990 to 2023. Raw data are shown in Table \ref{data}; these data are annualized and available in the World Bank's database. We compare our model with an existing model, Gompertz, which is suitable for modelling increasing trend data.
%%%

%\small\url{https://databank.worldbank.org/source/world-development-indicators}. \normalsize

%%%%%%%%%
\subsection{Application to electricity production from natural gas sources in the United States ($\%$ of total)}\label{Electricity Production}
Using data from 1990 to 2021, we train the model and estimate its parameters: $\hat{\lambda}=-0.03828096$, and $\hat{\sigma}=  0.0673062$. To forecast electricity production from natural gas sources in the United States ($\%$ of total)  for 2022 and 2023, we apply EMF and ECMF, replacing the estimated parameters in (\ref{ETF1}) and (\ref{ECTF1}), respectively. See Table \ref{InfantsETFandECTF}. Figure \ref{ECTFfigureInfantDeath} shows the
performance of the stochastic SL diffusion process for our forecasts. The left panel displays the observed data, estimated trend function (EMF), and estimated confidence bands, illustrating how well the SL process fits the current data and makes accurate predictions. The right panel also shows the conditional trend function estimation, forecasts, and estimated confidence bands. All computations in this paper were performed using R software \cite{R2016}.
%%%%%%%%%%%%%%
  
\begin{table}[H]
\caption{Electricity production from natural gas sources in the United States.}
\begin{center}
\small
\begin{tabular} {  p {.60 cm}  p {10.56 cm} }
   % \hline
   % \multicolumn{3} { | c | }{Books}\\
    \hline
    %%%%
    Year&1990 \hspace{1cm} 1991\hspace{1.2cm} 1992 \hspace{1cm} 1993\hspace{1cm} 1994 \hspace{1cm} 1995 \\
     Data&11.85815 \hspace{0.1cm}12.27868 \hspace{.3cm} 12.97155\hspace{0.3cm}  12.92770\hspace{.2cm} 14.16742\hspace{0.5cm}14.76346\\
     \hline
     %%%%
    Year&1996 \hspace{1cm} 1997\hspace{1.2cm} 1998\hspace{1cm} 1999\hspace{1cm} 2000 \hspace{1cm} 2001 \\
     Data&13.02046 \hspace{0.1cm} 13.67453  \hspace{.3cm}14.57905 \hspace{.1cm}  14.93086\hspace{0.4cm}15.65117   \hspace{0.4cm}17.07274\\
      \hline
      %%%
      Year&2002 \hspace{1cm} 2003\hspace{1.2cm} 2004 \hspace{1cm} 2005\hspace{1cm} 2006 \hspace{1cm} 2007 \\
     Data&17.58605 \hspace{0.2cm}16.41918   \hspace{.3cm}17.52281  \hspace{0.1cm} 18.22920\hspace{0.4cm}19.59561  \hspace{0.4cm}21.03976 \\
      \hline
      %%%
      Year& 2008 \hspace{1cm} 2009\hspace{1.2cm} 2010\hspace{1cm} 2011\hspace{1cm} 2012 \hspace{1cm} 2013 \\
     Data&20.83612\hspace{0.35cm}22.67735  \hspace{.15cm} 23.24735\hspace{0.3cm}  24.03179\hspace{0.4cm}29.47220 \hspace{0.2cm} 26.90093\\
      \hline
      %%%
      Year& 2014 \hspace{1cm} 2015\hspace{1.2cm} 2016\hspace{1cm} 2017\hspace{1cm} 2018 \hspace{1cm} 2019 \\
     Data&26.75654 \hspace{0.1cm}  31.79336 \hspace{.15cm} 32.81091 \hspace{0.15cm}  31.20787\hspace{0.25cm} 34.09804\hspace{0.35cm}  37.33868\\
      \hline
      %%%
      Year& 2020 \hspace{1cm} 2021\hspace{1.2cm} 2022\hspace{1cm} 2023\\
     Data&39.43958 \hspace{0.2cm}37.35339  \hspace{.3cm}38.71585 \hspace{0.25cm}41.90800 \\
      \hline
      %%%
    \end{tabular} \label{data}
    \normalsize
   \end{center}
   \end{table}
   \normalsize
%%%

%\newpage
\begin{table}[H]
\caption{Predictions with EMF and ECMF of the process.}\label{InfantsETFandECTF}
\begin{center}
\begin{tabular} {  p {.7 cm} p {9 cm} }
   % \hline
   % \multicolumn{3} { | c | }{Books}\\
    \hline
Year&\hspace{1cm} Real Data \hspace{1cm} EMF \hspace{1.7cm}ECMF \\
    \hline
   2022& \hspace{0.8cm} 38.71585 \hspace{01.2cm}41.67541 \hspace{01cm}38.84946\\
   2023&\hspace{0.8cm} 41.90800  \hspace{01.05cm}  43.34456 \hspace{0.9cm} 40.26646 \\
         \hline
    \end{tabular}
   \end{center} 
   \end{table}
 \begin{figure}[H]
 \begin{center}
    \includegraphics[width=8cm,height=5.5cm]{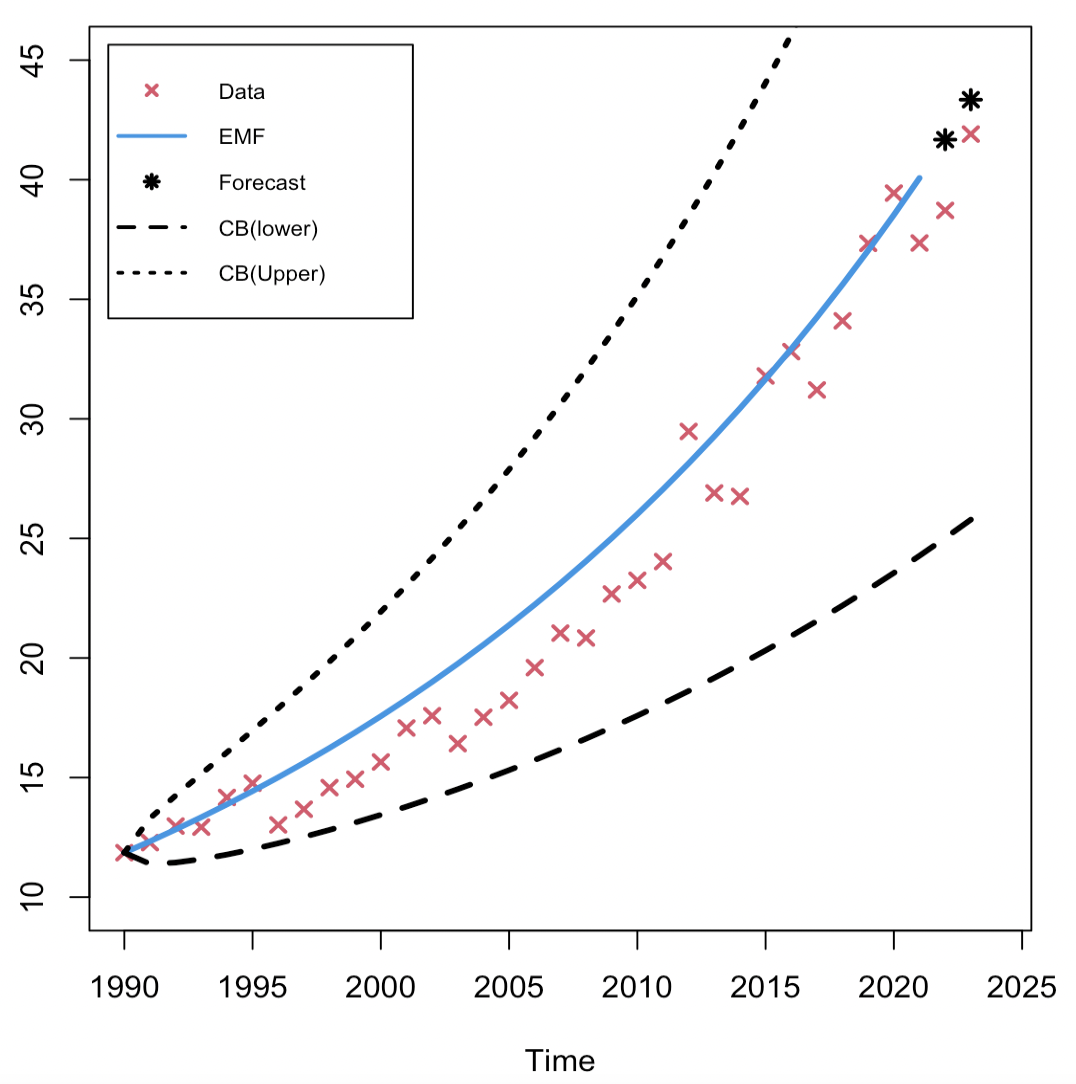}
    \includegraphics[width=8cm,height=5.5cm]{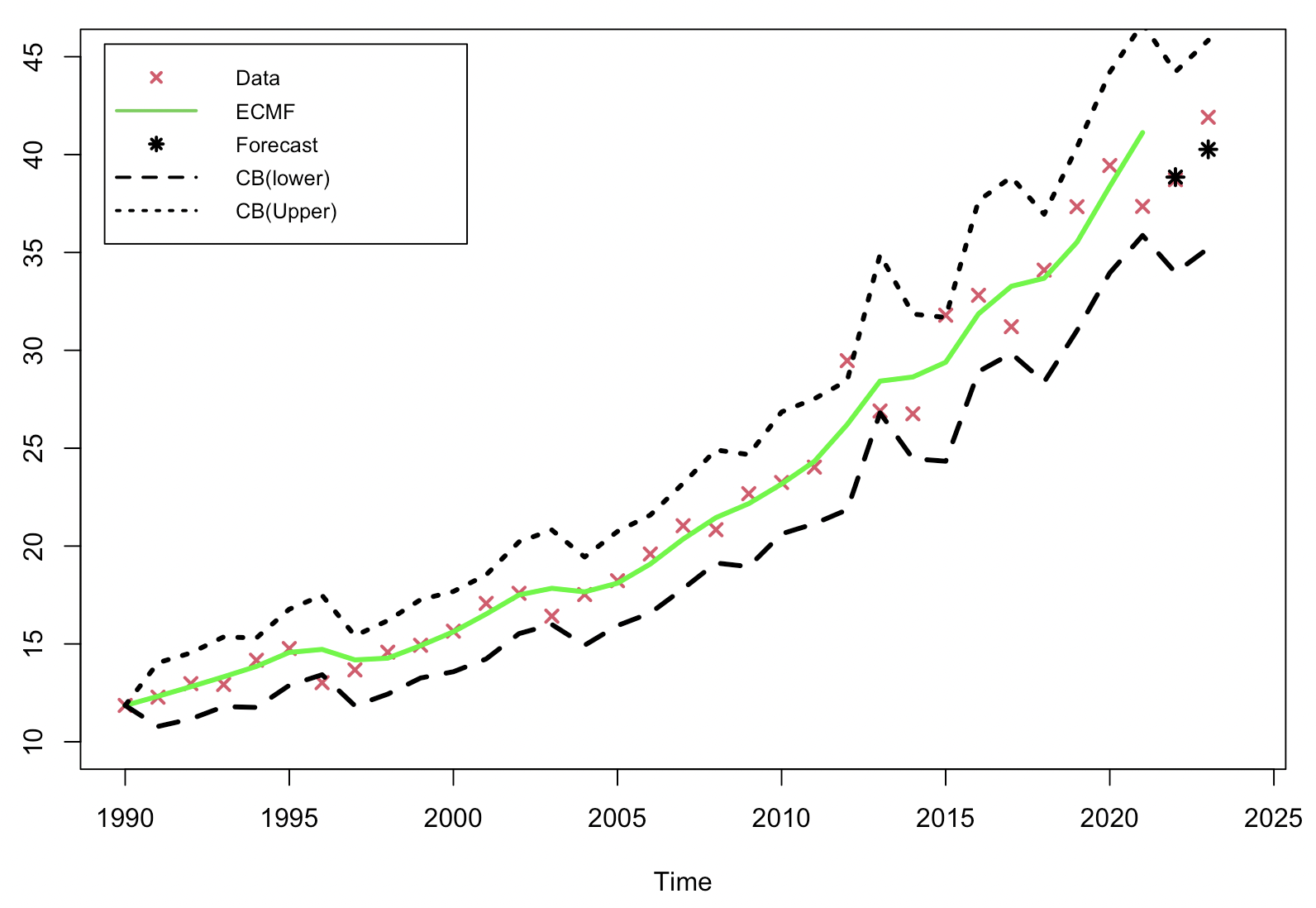}
\caption{Observed data, EMF, ECMF, and the forecasted values.}\label{ECTFfigureInfantDeath}
 \end{center} 
\end{figure}
We compare the proposed diffusion process with the Gompertz process \cite{RR209b} using the existing data. Table \ref{Comparing Models} shows the estimated parameters and the AIC criterion. The SL has a lower AIC and thus excels the Gompertz diffusion process. 
\begin{table}[H]
\caption{SL, Gompertz: Parameters and AIC.}\label{Comparing Models}
\begin{center}
\small
\begin{tabular} {  p {1.6cm}  p {9.5 cm} }
      \hline
Model&\hspace{1cm}$ \lambda$ \hspace{2.3cm} $\beta$  \hspace{2cm}$\sigma$\hspace{2.2cm}AIC\\
    \hline
SL& -0.03828096   \hspace{01.3cm}NA \hspace{1.1cm} 0.06730620 \hspace{0.65cm} 112.3892\\ 
  %Log-Logistic&0.05000004 \hspace{1cm} 0.03718991 \hspace{.1cm} 0.11200000\hspace{0.5cm}  544.4882\\
 Gompertz&0.060000022  \hspace{0.3cm}0.006881754\hspace{0.7cm} 0.067492691\hspace{0.6cm} 114.3477\\
   %Lomax&-59.05057463\hspace{0.8cm}-0.29692752\hspace{0.4cm}0.02290831\hspace{0.6cm}505.924\\
         \hline
    \end{tabular}
   \end{center} 
   \end{table}
   \normalsize
   % \vspace{3cm}
 Figure \ref{ETFInfantDeathComparing} shows the fits made using the methods mentioned in Table \ref{Comparing Models}.
  \begin{figure}[hbt!]
 \begin{center}
\includegraphics[width=11cm,height=7.5cm]{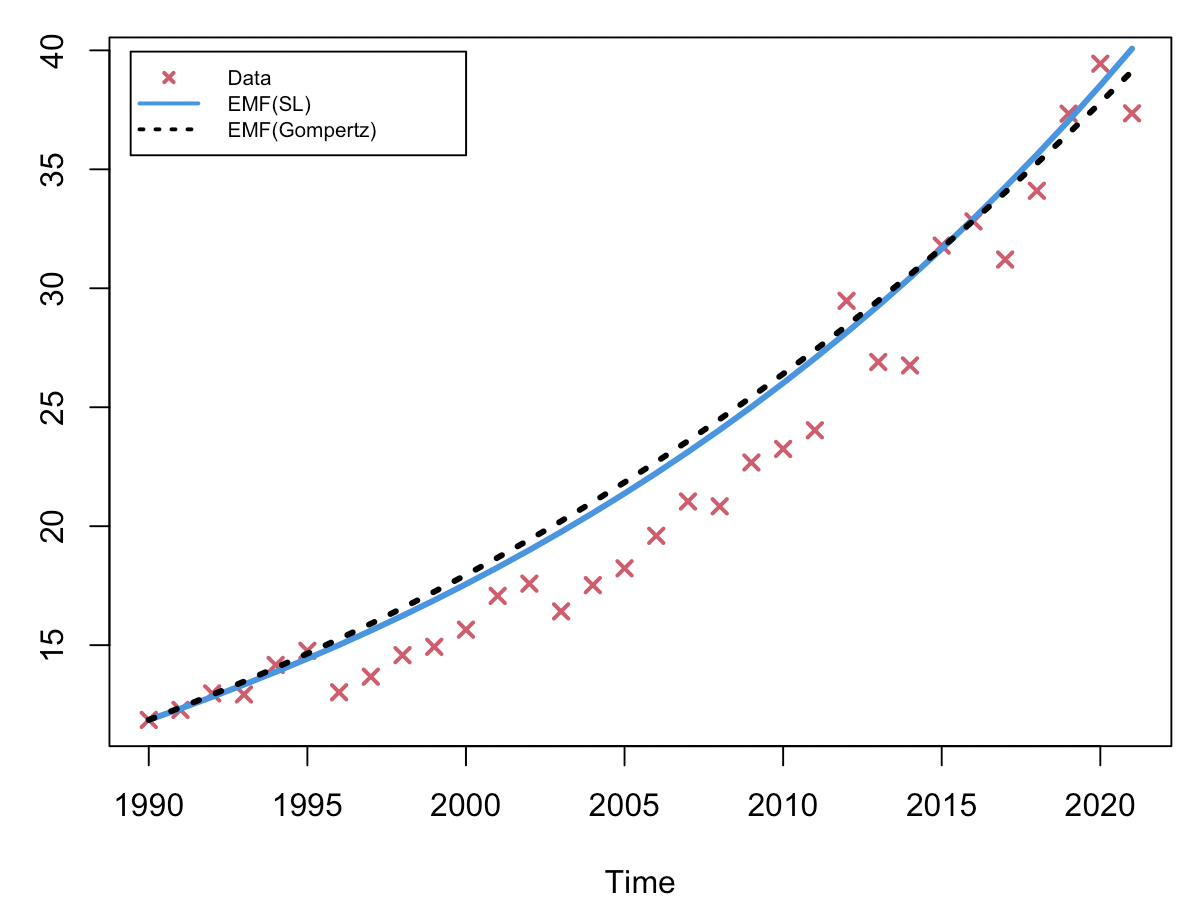}
   \caption{Observed data, EMF using SL, and Gompertz diffusion processes. }\label{ETFInfantDeathComparing}
 \end{center} 
\end{figure}

%%%
%\newpage
\subsection{Goodness of fit}\label{Goodness of fit}
Forecast accuracy can be measured using metrics such as mean absolute error (MAE), the root mean square error (RMSE), and the mean absolute percentage error (MAPE). 
We consider the observed data as $x(t)$, the predicted data $\widehat{x}(t)$ which is obtained by substituting the estimated parameter $\hat{\lambda}= -0.03828096$ in equation (\ref{EMF}), and the difference between them $x(t)-\widehat{x}(t)$ over the years 1990-2021. We calculate the mean absolute error (MAE), the root mean square error (RMSE), and the mean absolute percentage error (MAPE) as follows
\begin{align*}
{\text{MAE}}&=\frac{1}{32}\sum_{i=1}^{32}{ |  x(t_i)-\widehat{x_i}(t) |}=1.718274,\\
{\text{RMSE}}&=\sqrt{\frac{1}{32}\sum_{i=1}^{32}{ |  x(t_i)-\widehat{x_i}(t) |^2}}=66.1315,\\
{\text{MAPE}}&=\frac{1}{32}\sum_{i=1}^{32}{\frac{ |  x(t_i)-\widehat{x_i}(t) |}{ x(t)}}\times 100=7.559711.
\end{align*}
The value obtained for MAPE is less than 10, and this indicates that we obtained a high-accuracy prediction according to \cite{Lewis}.
   
 \subsection{Simulation}\label{simulation}
 The sample trajectories were simulated using Equation (\ref{exa}) with values of $\alpha$, $\sigma$, and $x_1$ that approximate the values of these parameters in the real example in the application on which this study is conducted in section \ref{Electricity Production}. Ten trajectories were generated, each containing 500 values. 
  Figure \ref{SimulationAndMF} shows the simulated trajectories of the SL process versus to MF for the particular case of  ${\lambda}=-0.03828096$, ${\sigma}=  0.0673062$, $t_1=1990$, $\Delta t=0.066$, $x_{t_1}=11.85815$ which respectively correspond to values close to those obtained in the study of $X(t)$. We used the same algorithm as in \cite{Safa1}  to simulate the trajectories of the SL process.    
  \begin{figure}[hbt!]
 \begin{center}
\includegraphics[width=11cm,height=6.5cm]{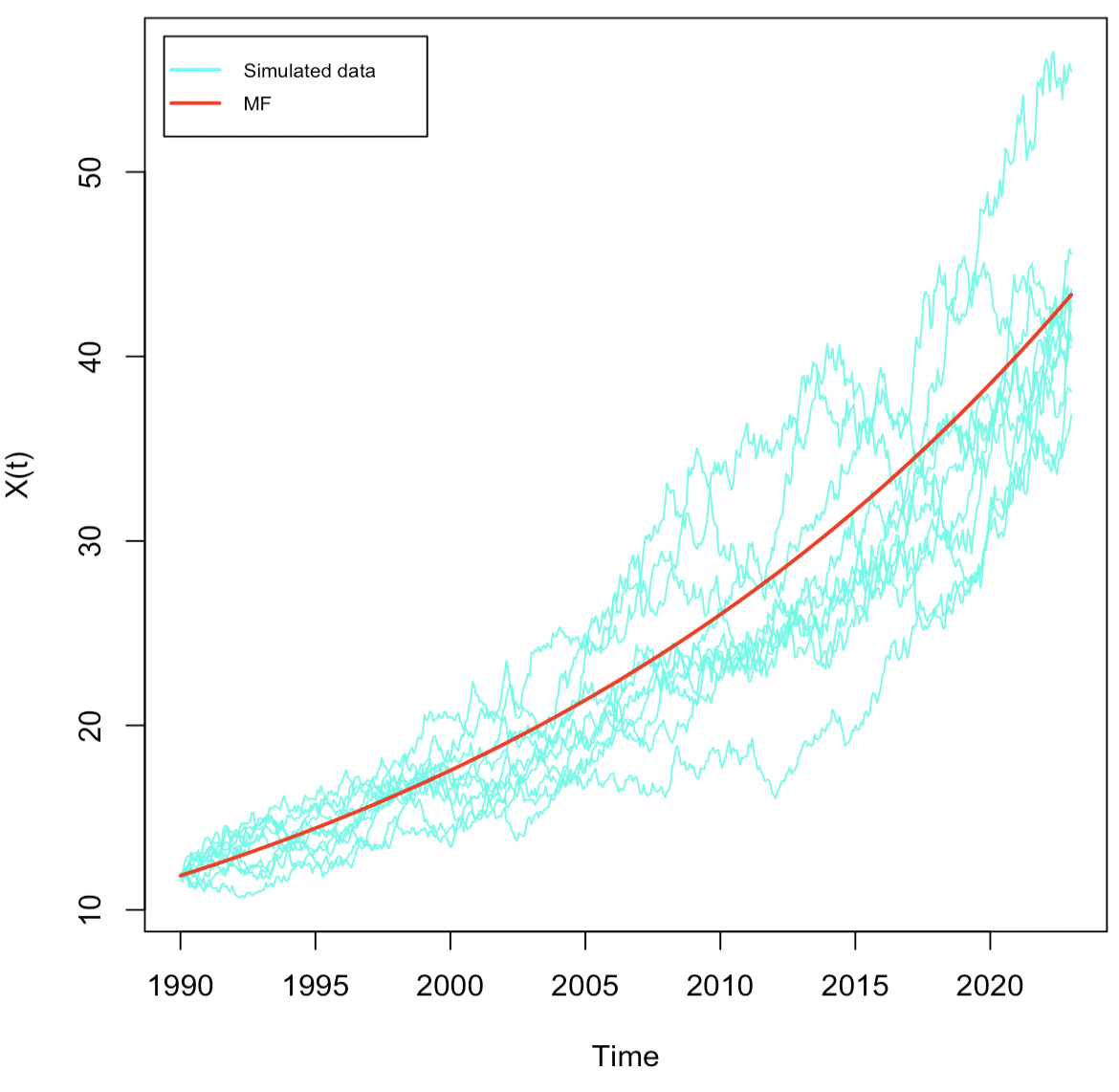}
   \caption{ Sine-like (SL) process simulated along with the mean function (MF).}\label{SimulationAndMF}
 \end{center} 
\end{figure}
 % \newpage    
\section{Conclusions}\label{Conclusion}
 Based on the results obtained (see Table \ref{InfantsETFandECTF}, Figure \ref{ECTFfigureInfantDeath}), we conclude that modeling the U.S. data on electricity generated from natural gas ($\%$ of total) over the period 1990–2021 using the novel Sine-Like model- estimated according to the procedure outlined in Section \ref{MLsection}- yields a high level of accuracy in the historical trend and produces reliable medium-term forecasts for 2022–2023. The conditioned trend function provides much better descriptions and forecasts than the basic trend function. Furthermore, a comparison with the Gompertz model using the same dataset and sample period demonstrates that the sine-like stochastic diffusion model achieves superior overall performance.\\ 
  %%%%%%
Future work could build on this study by incorporating external time-varying factors-such as natural gas prices, renewable energy adoption, or seasonal demand- directly into the drift and/or diffusion terms of the SDE. This would require moving from a univariate to a multivariate stochastic framework, such as coupled SDE systems or stochastic regression models.

\end{document}